\DeclarePairedDelimiter\abs{\lvert}{\rvert}
\newtheorem{theorem}{Theorem}
\newtheorem{lemma}{Lemma}
\def\E{\mathbb{E}}
\def\Beta{\textrm{Beta}}
\newtheorem{cor}{Corollary}
\begin{document}
\begin{frontmatter}
\title{Moments of the multivariate Beta distribution}
\author{Feng Zhao}

 \begin{abstract}
In this paper, we extend Beta distribution to 2 by 2 matrix and
give the analytical
formula for its moments. Our analytical formula can be used to analyze the asymptotic behavior of Beta distribution
for 2 by 2 matrix.
 \end{abstract}

\begin{keyword}
	multivariate Beta distribution \sep higher moments
\end{keyword}
\end{frontmatter}
\section{Introduction}
Moments of probability distribution are an important topic in statistics. Given the moment sequence,
the probability distribution is unique under some mind conditions.
To prove the convergence of random variables,  we can prove the convergence of its moment sequences instead.
To accomplish such goal, the analytical form of moments is a prerequisite. The techniques to compute moments
for different distributions differ. In this article, we focus
on the Beta distribution of 2 by 2 matrix.

David introduces an extension of
multivariate extension for Beta distribution,
denoted as $\mathbf{B}(\alpha, \beta; I_p)$ (see \cite{david1981}).
It is a random $p\times p$  symmetric matrix $W$ whose density
function is given by
\begin{align}
p(w) &= \frac{1}{B_p(\alpha, \beta)}\abs{I-w}^{\alpha-\frac{p+1}{2}}
\abs{w}^{\beta-\frac{p+1}{2}} \textrm{ where } w, I-w \in S_{p,p}^{++}
\label{eq:distr}\\
B_p(\alpha, \beta) &= \int_{w,I-w \in S_{p,p}^{+} }\abs{I-w}^{\alpha-\frac{p+1}{2}}
\abs{w}^{\beta-\frac{p+1}{2}}dw \textrm{ where } \alpha, \beta > \frac{p-1}{2}
\end{align}
$B_p(\alpha, \beta)$ is called the multivariate Beta function (see \cite{siegel_1935}); 
$\abs{W}$ is the determinant of matrix $W$ and $S_{p,p}^{++}$ is the 
collection of positive
definite matrix.
When $p=1$, the distribution reduces to normal Beta distribution for
$0<x<1$.

This extension may have useful applications in multivariate statistical
problems but little is known about the analytical property of such extension.
For example, it is unknown whether the moments $\E[f(W)]$ can be written in concise form,
where $f$ is a monomial about the positive-definite matrix $W$.

Konno has derived the formula of the moment up to second order (see \cite{konno_1988}).
In this paper, we focus on the case $p=2$ and deduce the analytical form of 
moments for $\mathbf{B}(\alpha, \beta; I_2)$.
This formula
includes the expectation and variance, which are the first and second
order moment respectively. Our moments formula, as
far as we know, is novel and can be used directly in the computation
related with multivariate Beta models instead of approximating
numerical integration.

In this article, the following notation convention is adopted:
$W=\begin{pmatrix} X & Z \\ Z & Y \end{pmatrix}$ is the symmetric random
matrix to be considered. Its density function is given by Equation \eqref{eq:distr}, which can
also be treated as the joint density function of $X,Y,Z$.
$\abs{W}=XY-Z^2$.
Let $\E_{\alpha,\beta}[f(X,Y, Z)] = \int f(x,y,z)p(w)dw$ denotes the expectation
with $\mathbf{B}(\alpha, \beta;I_2)$ where $f(\cdot, \cdot, \cdot)$ is an arbitrary function with three
variables. We will compute $\E_{\alpha,\beta}[f(X,Y, Z)]$
when $f(X,Y,Z)$ takes the monomial form: $f(X,Y,Z)=X^m Y^r Z^{2t}$.

\section{Marginal Distribution}
In this section we will compute $\E_{\alpha,\beta}[f(X,Y, Z)]$
for $f(X,Y,Z)=X^m$ and show that $X$ is one dimensional Beta distribution.
To accomplish our goals, we need the following lemma:
\begin{lemma}\label{lem:AB}
	Let $A = XY - Z^2, B = 1 - X - Y + A$, then we have
	\begin{align}
	\E_{\alpha, \beta}[Af(X,Y,Z)] =&
	\frac{\alpha(\alpha-1/2)}{(\alpha+\beta)(\alpha+\beta-1/2)}\E_{\alpha+1, \beta}[f(X,Y,Z)]
	\label{eq:Aexp} \\
	\E_{\alpha,\beta}[Bf(X,Y,Z)] =&
	\frac{\beta(\beta-1/2)}{(\alpha+\beta)(\alpha+\beta-1/2)}\E_{\alpha, \beta+1}[f(X,Y,Z)]
	\label{eq:Bexp}
	\end{align}
\end{lemma}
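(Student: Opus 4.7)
The key observation is that $A$ and $B$ are the two natural determinants attached to $W$: directly $A = XY - Z^2 = \abs{W}$, and a one-line expansion gives $B = (1-X)(1-Y) - Z^2 = \abs{I-W}$. Multiplying the density \eqref{eq:distr} by $A$ (respectively by $B$) therefore just raises the exponent of $\abs{w}$ (respectively $\abs{I-w}$) by one, so the modified integrand is, up to a constant, again the density of a $\mathbf{B}$ distribution with exactly one of its parameters incremented by $1$.

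Carrying this out and renormalizing, I would write
\begin{equation*}
\E_{\alpha,\beta}[A\, f(X,Y,Z)] \;=\; \frac{B_2(\alpha+1,\beta)}{B_2(\alpha,\beta)}\, \E_{\alpha+1,\beta}[f(X,Y,Z)],
\end{equation*}
with the analogous relation for $B$ shifting $\beta$ instead; the only bookkeeping point is to keep the $\alpha$--$\beta$ pairing consistent with the convention in \eqref{eq:distr}.

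The remaining task is to evaluate the normalizing ratio. I would use Siegel's closed form $B_p(\alpha,\beta) = \Gamma_p(\alpha)\Gamma_p(\beta)/\Gamma_p(\alpha+\beta)$, specialized at $p=2$ via $\Gamma_2(u) = \sqrt{\pi}\,\Gamma(u)\,\Gamma(u-\tfrac{1}{2})$. Applying $\Gamma(u+1) = u\,\Gamma(u)$ twice: the factor $\Gamma_2(\alpha+1)/\Gamma_2(\alpha)$ contributes $\alpha(\alpha-\tfrac{1}{2})$, while $\Gamma_2(\alpha+\beta+1)/\Gamma_2(\alpha+\beta)$ in the denominator contributes $(\alpha+\beta)(\alpha+\beta-\tfrac{1}{2})$, and the $\Gamma_2(\beta)$ pieces cancel. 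This reproduces the claimed coefficient, and the same calculation with $\alpha$ and $\beta$ swapped gives the second identity. The argument is essentially mechanical once the determinant identifications are in place, so I do not foresee any genuine obstacle beyond the convention bookkeeping.
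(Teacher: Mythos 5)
Your proof is correct and follows essentially the same route as the paper's: identify $A=\abs{W}$ and $B=\abs{I-W}$ so that multiplying the integrand shifts one parameter of the density by one, then evaluate the resulting ratio $B_2(\alpha+1,\beta)/B_2(\alpha,\beta)$ via $B_p(a,b)=\Gamma_p(a)\Gamma_p(b)/\Gamma_p(a+b)$ and $\Gamma_2(u)=\sqrt{\pi}\,\Gamma(u)\Gamma(u-1/2)$. You in fact make explicit two things the paper's proof leaves implicit, namely the determinant identifications and the pairing caveat (the density as printed in \eqref{eq:distr} attaches $\alpha$ to $\abs{I-w}$, under which multiplying by $A=\abs{W}$ would shift $\beta$ rather than $\alpha$; the lemma and the rest of the paper tacitly use the opposite pairing), so your "bookkeeping" remark correctly flags the one genuine convention issue.
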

\begin{proof}
	For multivariate Beta function we have
	$B_p(a, b) = \frac{\Gamma_p(a)\Gamma_p(b)}{\Gamma_p(a+b)}$
	where $\Gamma_p$ is the multivariate Gamma function (see \cite{ingham_1933}).
	For $p=2$ we have $\Gamma_2(a) = \sqrt{\pi}\Gamma(a)\Gamma(a-1/2)$.
	\begin{align*}
	\frac{\E_{\alpha, \beta}[Af(X,Y,Z)]}{\E_{\alpha+1, \beta}[f(X,Y,Z)]} &
	=\frac{B_2(\alpha+1,\beta)}{B_2(\alpha,\beta)}\\
	&=\frac{\Gamma_2(\alpha+1)}{\Gamma_2(\alpha)}
	\frac{\Gamma_2(\alpha+\beta)}{\Gamma_2(\alpha+\beta+1)}\\
	& =\frac{\Gamma(\alpha+1)}{\Gamma(\alpha)}
	\frac{\Gamma(\alpha+1/2)}{\Gamma(\alpha-1/2)}
	\frac{\Gamma(\alpha+\beta)}{\Gamma(\alpha+\beta+1)}
	\frac{\Gamma(\alpha+\beta-1/2)}{\Gamma(\alpha+\beta+1/2)}\\
	&=\frac{\alpha(\alpha-1/2)}{(\alpha+\beta)(\alpha+\beta-1/2)}
	\end{align*}
	Thus Equation \eqref{eq:Aexp} is proved and Equation \eqref{eq:Bexp} follows similarly.
\end{proof}
Using the above Lemma, we give the main conclusion of this section:
\begin{theorem}\label{thm:Xm}
	$\E_{\alpha, \beta}[X^m] =
	\prod_{i=0}^{m-1}\frac{\alpha+i}{\alpha+\beta+i}$, and $X$
	follows Beta distribution $\Beta(\alpha, \beta)$.
\end{theorem}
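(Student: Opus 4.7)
The plan is to prove the stronger distributional claim that $X$ is Beta-distributed by finding its marginal density explicitly; the moment formula is then an immediate corollary of the classical univariate Beta moment identity. Although Lemma~\ref{lem:AB} provides a mechanism for shifting the shape parameters, neither $A$ nor $B$ factors through $X$ alone, so I would bypass the lemma and instead integrate the joint density over $y$ and $z$ directly, using a two-step substitution that decouples the two determinant factors.

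For fixed $x \in (0,1)$, first replace $y$ by $y' := y - z^2/x$, so that $\abs{W} = xy - z^2 = xy'$ and the condition $W \in S_{2,2}^{++}$ reduces to $y' > 0$. A short calculation then gives $\abs{I-W} = (1-x)(1-y') - z^2/x$. Next replace $z$ by $\tau := z/\sqrt{x(1-x)(1-y')}$, so that $\abs{I-W} = (1-x)(1-y')(1-\tau^2)$ and the condition $I - W \in S_{2,2}^{++}$ becomes $\tau \in (-1,1)$. The first substitution has unit Jacobian and the second contributes $\sqrt{x(1-x)(1-y')}$, so the combined Jacobian of $(y,z) \mapsto (y',\tau)$ is $\sqrt{x(1-x)(1-y')}$.

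Inserting these into the joint density, the exponents from $\abs{W}^{\beta-3/2}$ and $\abs{I-W}^{\alpha-3/2}$ combine with the square-root Jacobian, and the factors of $x$, $y'$ and $\tau$ separate cleanly: the density on the new coordinates is a product of a Beta density in $x$, a Beta density in $y'$, and a factor proportional to $(1-\tau^2)^{\alpha-3/2}$. In particular $X$, $Y'$, $\tau$ are mutually independent, and reading off the $x$-factor identifies the marginal of $X$ as a univariate Beta distribution in the paper's convention. The stated moment formula is then the classical telescoping $\E[X^m] = B(\alpha+m,\beta)/B(\alpha,\beta) = \prod_{i=0}^{m-1}(\alpha+i)/(\alpha+\beta+i)$.

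The main obstacle is bookkeeping rather than any conceptual subtlety: one must check that the two substitutions really are bijections onto the correct regions (so that the boundary constraints coming from $W, I-W \in S_{2,2}^{++}$ turn into the stated box $y'>0,\ \tau\in(-1,1)$), carry out the Jacobian computation, and carefully tally the various powers of $x$, $1-x$, $y'$, $1-y'$ and $1-\tau^2$. As a consistency check, the three marginal normalizing constants (a one-dimensional Beta integral in $x$, another in $y'$, and $\int_{-1}^{1}(1-\tau^2)^{\alpha-3/2}d\tau$) should multiply to $B_2(\alpha,\beta)$.
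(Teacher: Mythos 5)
Your route is genuinely different from the paper's. The paper never computes the marginal density: it derives the parameter--shift identities of Lemma~\ref{lem:AB} from the ratio $B_2(\alpha+1,\beta)/B_2(\alpha,\beta)$ of normalizing constants, obtains the first moment by taking expectations in the linear identity $B=1-X-Y+A$ together with the symmetry $\E_{\alpha,\beta}[X]=\E_{\alpha,\beta}[Y]$, gets the higher moments by recursion, and only then identifies the law of $X$ by moment determinacy. Your direct marginalization is sound and in fact proves more: the substitutions are correct ($y'=y-z^2/x$ does give $\abs{I-W}=(1-x)(1-y')-z^2/x$, and the Jacobian is $\sqrt{x(1-x)(1-y')}$), the map is a bijection onto a product region, and the factorized density shows that $X$, $Y'$ and $\tau$ are mutually \emph{independent} with explicit laws --- information invisible to the paper's recursion. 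One small correction to the region: the image box is $y'\in(0,1)$, $\tau\in(-1,1)$, not merely $y'>0$; the bound $y'<1$ is forced by $I-W\in S_{2,2}^{++}$ and is needed for your second substitution to be real.

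The substantive issue is the final tally, which you assert rather than carry out. Against the density exactly as printed in Equation~\eqref{eq:distr}, the $x$-factor is $x^{\beta-3/2}(1-x)^{\alpha-3/2}\cdot x^{1/2}(1-x)^{1/2}=x^{\beta-1}(1-x)^{\alpha-1}$, i.e.\ a $\Beta(\beta,\alpha)$ marginal, which yields $\E[X^m]=\prod_{i=0}^{m-1}(\beta+i)/(\alpha+\beta+i)$ --- the roles of $\alpha$ and $\beta$ come out interchanged relative to Theorem~\ref{thm:Xm}. This is not a defect of your method but an inconsistency internal to the paper: since $A=\abs{W}$ multiplies the $\abs{w}^{\beta-3/2}$ factor, it shifts $\beta$ rather than $\alpha$, so Lemma~\ref{lem:AB} and everything downstream are consistent only with the density $\abs{w}^{\alpha-3/2}\abs{I-w}^{\beta-3/2}$, i.e.\ with the exponents in Equation~\eqref{eq:distr} swapped. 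You must either adopt that corrected convention explicitly or state your conclusion with $\alpha$ and $\beta$ exchanged; as written, the claim that ``reading off the $x$-factor'' gives $B(\alpha+m,\beta)/B(\alpha,\beta)$ does not follow from the computation you describe. With the convention fixed, your argument is complete, and your proposed sanity check (that the three one--dimensional normalizing constants multiply to $B_2(\alpha,\beta)$) does indeed pass.
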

\begin{proof}
	Since the position of $X$ and $Y$ is symmetric,
	$\E_{\alpha, \beta}[X]=\E_{\alpha, \beta}[Y]$.
	Taking the expectation about $\Beta_2(\alpha, \beta)$
	on both sides of $B=1-X-Y+A$ and using the
	conclusion of Lemma \ref{lem:AB}, we have
	\begin{equation*}
	\frac{\beta(\beta-1/2)}{(\alpha+\beta)(\alpha+\beta-1/2)}
	= 1 - 2\E_{\alpha, \beta}[X] +
	\frac{\alpha(\alpha-1/2)}{(\alpha+\beta)(\alpha+\beta-1/2)}
	\end{equation*}
	Solving the about equation we get
	$\E_{\alpha, \beta}[X]=\frac{\alpha}{\alpha + \beta}$.
	Recursively using Equation \eqref{eq:Aexp} with $f(X,Y,Z)=X$
	we have $\E_{\alpha, \beta}[X^m] =
	\prod_{i=0}^{m-1}\frac{\alpha+i}{\alpha+\beta+i}$.
	This expression of moments
	is the same with that of Beta distribution on bounded interval $[0,1]$, we
	conclude that $X$ is actually Beta distribution $B(\alpha,
	\beta)$.
\end{proof}
\section{Mixed Moments}
In this section, we further compute $\E_{\alpha,\beta}[X^mY^rZ^{t}]$.
By symmetric property  $\E_{\alpha,\beta}[X^mY^rZ^{2t+1}] = 0$.
Therefore we only need to consider the case when the power of $Z$
is even. Firstly We consider the case when $r=0$:
\begin{theorem}\label{thm:mm}
	\begin{equation}\label{eq:ZXtm}
	\E_{\alpha, \beta}[X^mZ^{2t}] = \frac{(2t-1)!!}{2^t}
	\prod_{i=0}^{t-1}
	\frac{\beta+i}{(\alpha+\beta+i-1/2)}
	\frac{\prod_{i=0}^{t+m-1}\alpha+i}{\prod_{i=0}^{2t+m-1}
		\alpha+\beta+i}
	\end{equation}
\end{theorem}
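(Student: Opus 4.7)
My plan is to proceed by induction on $t$, uniformly in $(\alpha,\beta,m)$, with the base case $t=0$ supplied directly by Theorem~\ref{thm:Xm}.

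For the inductive step I exploit $Z^2 = XY - A$, writing $X^m Z^{2t} = (XY - A)\, X^m Z^{2(t-1)}$. Applying equation~(\ref{eq:Aexp}) of Lemma~\ref{lem:AB} with $f = X^m Z^{2(t-1)}$ yields
\begin{equation*}
\E_{\alpha,\beta}[X^m Z^{2t}] = \E_{\alpha,\beta}[X^{m+1} Y Z^{2(t-1)}] - \frac{\alpha(\alpha-1/2)}{(\alpha+\beta)(\alpha+\beta-1/2)}\,\E_{\alpha+1,\beta}[X^m Z^{2(t-1)}].
\end{equation*}
The only obstruction is the stray factor of $Y$ in the first term on the right. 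To eliminate it, I would apply equation~(\ref{eq:Bexp}) with $f = X^{m+1} Z^{2(t-1)}$, expand $B = 1 - X - Y + A$, and invoke (\ref{eq:Aexp}) once more on the resulting $A$-term. Solving for $\E_{\alpha,\beta}[X^{m+1} Y Z^{2(t-1)}]$ expresses it as an explicit linear combination of the four moments $\E_{\alpha,\beta}[X^{m+1} Z^{2(t-1)}]$, $\E_{\alpha,\beta}[X^{m+2} Z^{2(t-1)}]$, $\E_{\alpha+1,\beta}[X^{m+1} Z^{2(t-1)}]$, and $\E_{\alpha,\beta+1}[X^{m+1} Z^{2(t-1)}]$.

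Substituting back produces a five-term recurrence for $\E_{\alpha,\beta}[X^m Z^{2t}]$ in which every summand carries the power $Z^{2(t-1)}$, so the inductive hypothesis at parameters $(\alpha,\beta)$, $(\alpha+1,\beta)$, and $(\alpha,\beta+1)$ supplies a closed form for each term. The theorem then reduces to an algebraic identity: factor out the common Pochhammer products and show that the remaining rational function of $\alpha,\beta,m,t$ matches the ratio of the target formula at $(m,t)$ to the same formula at $(m,t-1)$.

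The principal obstacle is this final algebraic verification. The five hypothesis terms differ in the shifts they apply to the products $\prod(\alpha+i)$, $\prod(\beta+i)$, and $\prod(\alpha+\beta+i)$, since they are evaluated at three distinct parameter pairs and at three distinct values of $m$. Tracking how the factor $(2t-1)/2$, the single new Pochhammer factor $\beta+t-1$, and the two new $(\alpha+\beta)$-denominator factors all emerge in the combination is routine but tedious; I expect the cleanest approach is to divide the recurrence through by the target right-hand side and verify that the resulting rational identity reduces to $0 = 0$.
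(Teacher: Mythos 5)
Your proposal is correct and follows essentially the same route as the paper: induction on $t$ uniformly in $(\alpha,\beta,m)$, the decomposition $Z^2=XY-A$ with $Y$ eliminated via $B=1-X-Y+A$, and Lemma~\ref{lem:AB} applied to the resulting $A$- and $B$-terms, after which the inductive hypothesis at $(\alpha,\beta)$, $(\alpha+1,\beta)$, $(\alpha,\beta+1)$ collapses everything to a multiple of $\E_{\alpha,\beta}[X^{m+1}Z^{2(t-1)}]$. The ``routine but tedious'' final verification you defer is exactly the displayed chain of simplifications in the paper's proof, yielding the factor $\frac{(t-1/2)(\beta+t-1)}{(\alpha+\beta+t-3/2)(\alpha+\beta+2t+m-1)}$.
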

\begin{proof}
	We use induction to show Equation \eqref{eq:ZXtm} is true.
	Firstly, Equation \eqref{eq:ZXtm} is true for $t=0$ from 
	Theorem \ref{thm:Xm}. Let $A, B$ be the same as those
	in Lemma \ref{lem:AB}.
	Suppose Equation \eqref{eq:ZXtm} holds for $\E[Z^{2t-2}X^m]$,
	using $Z^2=XY-A=X(1-X-A+B)-A$, then
	\begin{align*}
	\E_{\alpha,\beta}[Z^{2t}X^m] &=
	\E_{\alpha,\beta}[Z^{2t-2}X^m(X-X^2+AX-BX-A)] \\
	&= \E_{\alpha,\beta}[Z^{2t-2}(X^{m+1} - X^{m+2})] +
	\E_{\alpha,\beta}[A Z^{2t-2} (X^{m+1} - X^m)] \\
	&-
	\E_{\alpha,\beta}[BZ^{2t-2} X^{m+1}] \\
	&= \left(1-\frac{\alpha+t+m}{\alpha+\beta+2t+m-1}
	\right)\E_{\alpha, \beta}[Z^{2t-2}X^{m+1}] \\
	&+
	\frac{\alpha(\alpha-1/2)}{(\alpha+\beta)(\alpha+\beta-1/2)}
	\E_{\alpha+1,\beta}[Z^{2t-2}(X^{m+1} - X^m)] \\
	&-\frac{\beta(\beta-1/2)}{(\alpha+\beta)(\alpha+\beta-1/2)}
	\E_{\alpha,\beta+1}[Z^{2t-2}X^{m+1}]\\
	&=\frac{\beta+t-1}{\alpha+\beta + 2t+m-1}
	\E_{\alpha,\beta}[Z^{2t-2}X^{m+1}] \\
	&+\frac{(\alpha+t+m)(\alpha-1/2)}{(\alpha+\beta+2t-1+m)(\alpha+\beta + t - 3/2)}\\
	&\cdot
	\left(1-\frac{\alpha+\beta+2(t-1)+m+1}{\alpha+t+m}\right)
	\E_{\alpha, \beta}[Z^{2t-2}X^{m+1}] \\
	&-\frac{(\beta+t-1)(\beta-1/2)}{(\alpha+\beta+2t-1+m)(\alpha+\beta + t - 3/2)}
	\E_{\alpha,\beta}[Z^{2t-2}X^{m+1}]\\
	&=\frac{(t-1/2)(\beta+t-1)}{(\alpha+\beta+t-3/2)(\alpha+\beta+2t+m-1)}
	\E_{\alpha, \beta}[Z^{2t-2}X^{m+1}]
	\end{align*}
	Using Equation \eqref{eq:ZXtm} for $t-1$
	we can get the same form of expression for $t$.
\end{proof}
From Theorem \ref{thm:mm},
we can get
the general formula for the mixed moment when $m\geq r$:
\begin{cor}\label{cor:mr}
	\begin{align}
	\E_{\alpha, \beta}[X^mY^rZ^{2t}] &= \frac{(2t-1)!!}{2^t}
	\frac{\prod_{j=0}^{t-1} (\beta+j)}{\prod_{j=0}^{t+r-1} (\alpha+\beta-1/2+j)}
	\frac{\prod_{j=0}^{t+m-1}(\alpha+j)}{\prod_{j=0}^{2t+m-1} (\alpha+\beta+j)}\notag\\
	&\cdot\sum_{i=0}^r \frac{1}{2^i}\binom{r}{i}\prod_{j=1}^i (2t-1+2j)
	\frac{\displaystyle\prod_{j=0}^{r-i-1}(\alpha-1/2+j)
		\prod_{j=0}^{i-1}(\beta+j+t)}{\prod_{j=0}^{i-1}(\alpha+\beta+j+2t+m)}
	\label{eq:mrt}
	\end{align}
\end{cor}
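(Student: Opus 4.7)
The plan is to reduce $\E_{\alpha,\beta}[X^m Y^r Z^{2t}]$ to expressions already handled by Theorem~\ref{thm:mm} via the scalar identity $XY = Z^2 + A$, which is immediate from the definition $A = XY - Z^2$ in Lemma~\ref{lem:AB}. Because $m \geq r$, I would factor
\[
X^m Y^r Z^{2t} \;=\; X^{m-r}(XY)^r Z^{2t} \;=\; X^{m-r}(Z^2 + A)^r Z^{2t} \;=\; \sum_{i=0}^r \binom{r}{i}\, X^{m-r} A^i Z^{2(t+r-i)},
\]
so by linearity the problem reduces to evaluating $\E_{\alpha,\beta}[A^i X^{m-r} Z^{2(t+r-i)}]$ for each $i \in \{0,\dots,r\}$.

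Next, since Equation~\eqref{eq:Aexp} of Lemma~\ref{lem:AB} is valid for an arbitrary function of $X,Y,Z$, iterating it $i$ times yields
\[
\E_{\alpha,\beta}\!\left[A^i\, g(X,Y,Z)\right] \;=\; \left(\prod_{j=0}^{i-1}\frac{(\alpha+j)(\alpha+j-1/2)}{(\alpha+\beta+j)(\alpha+\beta+j-1/2)}\right) \E_{\alpha+i,\beta}\!\left[g(X,Y,Z)\right].
\]
Specializing to $g = X^{m-r}Z^{2(t+r-i)}$, each summand is precisely of the type evaluated by Theorem~\ref{thm:mm} at parameters $(\alpha+i,\beta)$ with $X$-power $m-r$ and $Z$-power $2(t+r-i)$. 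Substituting \eqref{eq:ZXtm} produces a fully explicit finite sum.

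The remaining work is to massage this sum into the form \eqref{eq:mrt}. I would reindex by $i \mapsto r - i$ so that the summand is indexed by the exponent of $Z^2$ rather than of $A$, and then split every product that mixes an $i$-free part with an $i$-dependent part, for instance
\[
\prod_{j=0}^{t+i-1}(\beta+j) \;=\; \prod_{j=0}^{t-1}(\beta+j)\cdot\prod_{j=0}^{i-1}(\beta+t+j),
\]
and analogously for the products in $\alpha$, $\alpha-1/2$, $\alpha+\beta$ and $\alpha+\beta-1/2$. The four groups of rising factorials that arise, one from Lemma~\ref{lem:AB} at the unshifted parameters and one from Theorem~\ref{thm:mm} at the shifted parameters $(\alpha+r-i,\beta)$, telescope exactly so as to produce the $i$-free prefactor of \eqref{eq:mrt}; what remains inside the sum should reassemble into $\binom{r}{i}\,2^{-i}\prod_{j=1}^i (2t+2j-1)$ multiplied by the advertised products of $(\alpha-1/2+j)$, $(\beta+t+j)$, and $(\alpha+\beta+2t+m+j)$.

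The main obstacle is this last bookkeeping step. There is no new conceptual ingredient past the binomial expansion, but one must carefully verify that the shifted Pochhammer-style products from Lemma~\ref{lem:AB} cancel with those coming from Theorem~\ref{thm:mm} under the reindexing $i\mapsto r-i$. A sanity check is available: at $r=0$ the corollary collapses to Theorem~\ref{thm:mm}, and at $r=1$ only two summands appear, which I would verify by hand before committing to the general calculation.
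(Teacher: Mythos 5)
Your proposal follows essentially the same route as the paper: expand $X^{m-r}(A+Z^2)^r$ by the binomial theorem, pull out the powers of $A$ by iterating Equation \eqref{eq:Aexp}, and then apply Theorem \ref{thm:mm} at the shifted parameters; your reindexing $i\mapsto r-i$ lands you exactly on the paper's summation convention. The paper is just as terse as you are about the final bookkeeping, so nothing is missing relative to the published argument.
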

Since $\E_{\alpha, \beta}[X^mY^rZ^{2t}]=\E_{\alpha, \beta}[X^rY^mZ^{2t}]$, 
when $m<r$ we can exchange $m$ with $r$ and then use Corollary \ref{cor:mr}.
\begin{proof}
	From Lemma \ref{lem:AB}, we have
	$X^m Y^r Z^{2t} = X^{m-r}(A+Z^2)^r Z^{2t} $.
	Using binomial theorem we have
	$\E_{\alpha, \beta}[X^m Y^r Z^{2t}] = \sum_{i=0}^r \binom{r}{i}\E_{\alpha, \beta}
	[A^{r-i}X^{m-r}Z^{2(t+i)}]$. Then using Equation \eqref{eq:Aexp} recursively
	we have 
	\begin{align*}
	\E_{\alpha, \beta}
	[A^{r-i}X^{m-r}Z^{2(t+i)}] =&
	\prod_{j=0}^{r-i-1}\frac{(\alpha+j)(\alpha+j-1/2)}
	{(\alpha+j+\beta)(\alpha+j+\beta-1/2)} \\
	\cdot & \E_{\alpha+r-i, \beta}
	[X^{m-r}Z^{2(t+i)}]
	\end{align*}
	Using Theorem \ref{thm:mm} we can finally get the expression in
	Equation \eqref{eq:mrt}.
\end{proof}
\section{Case Study}
In this section, we will give a natural example which illustrates how our result can be used.
We will consider the random matrix $S=QQ^T$ where $Q$ is $n\times k$ random orthogonal matrix.
We are interested in how $\E[S_{11}^mS_{12}^{2t}]$ changes as $k\to \infty$ when $r=\frac{k}{n}$ is fixed.

From Proposition 7.2 of \cite{eaton1989group},
$
\begin{pmatrix}
S_{11} & S_{12} \\
S_{21} & S_{22}
\end{pmatrix}
$
is exactly 2 by 2 random matrix of Beta distribution with parameter $B(\frac{k}{2}, \frac{n-k}{2}, I_2)$.
Using Theorem \ref{thm:mm}, we could write
$\E[S_{11}^mS_{12}^{2t}] \sim \frac{(2t-1)!!}{2^t} \frac{r^t(1-t)^{t+m}}{n^t} = O(n^{-t})$.
That is, $\E[S_{11}^mS_{12}^{2t}]$ decreases in the order of $n^{-t}$.

\section{Conclusion}
We have derived the formula of moments for multivariate Beta distribution
of 2 by 2 matrix. This result is helpful for analyzing other statistical properties of multivariate Beta distribution.
\bibliographystyle{plain}
\bibliography{exportlist}

\end{document}